\newtheorem{theorem}{Theorem}
\newtheorem{corollary}[theorem]{Corollary}
\newtheorem{lemma}[theorem]{Lemma}
\newtheorem{proposition}[theorem]{Proposition}
\newenvironment{proof}[1][Proof]{\noindent\textbf{#1.} }{\ \rule{0.5em}{0.5em}}
\begin{document}

\title{On the condition of conversion of classical probability distribution
families into quantum families}
\author{Keiji Matssumoto \\
National Institute of Informatics, \\
2-1-2, Hitotsubashi, Chiyoda-ku, Tokyo 101-8430\\
keiji@nii.ac.jp}
\maketitle

\begin{abstract}
The purpose of the paper is to study the condition for a probability
distribution family to a quantum state family. This is an (relatively) easy
example of quantum version of "comparison of statistical experiments", which
had turned out to supply deep insight into the foundation of classical and
quantum statistics \cite{Strasser}\cite{Torgersen}. It turns out use of
maximal quantum $f$-divergence is useful in characterizing the
classical-quantum transformability.
\end{abstract}

\section{Introduction}

The purpose of the paper is to study the condition for a probability
distribution family to a quantum state family. This is an (relatively) easy
example of quantum version of "comparison of statistical experiments", which
had turned out to supply deep insight into the foundation of classical and
quantum statistics \cite{Strasser}\cite{Torgersen}. Consideration of such a
problem nicely characterizes some known statistically important quantities,
giving their new operational meaning and proving some of their properties in
smart way. For example, RLD Fisher information, which is known to be the
achievable lower bound to the minimum mean square error of the Gaussian
shift model\thinspace \cite{Holevo:82} and `coherent' pure state
models\thinspace \cite{FujiwaraNagaoka:96}\cite{Matsumoto:pure}, is
characterized as the smallest classical Fisher information to simulate the
quantum statistical model locally. Also, a version of quantum relative
entropy, first studied by \cite{Belavkin}, turned out to be the smallest
classical entropy to generate two point quantum state family\cite%
{Matsumoto:relative}. Due to this characterization, this version of quantum
relative entropy had turned out to be the largest monotone relative entropy
which coincide with its classical counter part in commutative case\cite%
{Matsumoto:relative}.

Below, we describe our setting precisely. In the paper, the dimension of
Hilbert space $\mathcal{H}$ is finite. $\mathcal{L}\left( \mathcal{H}\right) 
$ is the set of all linear transforms on the Hilbert space $\mathcal{H}$. A
trace preserving completely positive map from a finite dimensional
commutating matrices (which is interpreted as a real function over a finite
set) to $\mathcal{L}\left( \mathcal{H}\right) $ is called
classical-to-quantum (CQ) map. We consider parameterized family of
probability distributions and quantum states, where the parameter space is
binary set, $\Theta :=\left\{ 0,1\right\} $. Hereafter, a probability
density function $p$ over the finite set $\mathcal{X}$ is always identified
with the finite dimensional matrix 
\[
\sum_{x\in \mathcal{X}}p\left( x\right) \left\vert e_{x}\right\rangle
\left\langle e_{x}\right\vert , 
\]%
where $\left\{ \left\vert e_{x}\right\rangle \right\} _{x\in \mathcal{X}}$
is an orthonormal set of vectors. Our problem is to investigate the
conditions for the existence of a CPTP map $\Gamma $ with

\begin{equation}
\Gamma \left( p_{\theta }\right) =\sigma _{\theta },\,\forall \theta \in
\Theta ,  \label{map}
\end{equation}%
where $\left\{ p_{\theta }\right\} _{\theta \in \Theta }$ and $\left\{
\sigma _{\theta }\right\} _{\theta \in \Theta }$ are given family of
probability density functions and density operators, respectively.

\section{$f$-Divergence}

When $\left\{ \sigma _{\theta }\right\} _{\theta \in \Theta }$ is also
commutative, i.e., the condition for classical-to-classical conversion is
well-studied, and the necessary and sufficient condition is characterized by 
$f$-divergence; Given a convex function $f$ on $[0,\infty )$, $f$-divergence
between probability distributions $p_{0}$ and $p_{1}$ is 
\[
D_{f}\left( p_{0}||p_{1}\right) :=\sum_{x\in \mathrm{supp\,}%
p_{1}}\,p_{1}\left( x\right) f\left( \frac{p_{0}\left( x\right) }{%
p_{1}\left( x\right) }\right) +\left( \sum_{x\not\in \mathrm{supp\,}%
p_{1}}\,p_{0}\left( x\right) \right) \lim_{\lambda \rightarrow \infty }\frac{%
f\left( \lambda \right) }{\lambda }. 
\]

\begin{lemma}
\label{lem:c-c-divergence}((\cite{Torgersen-finite}),(\cite{Torgersen}%
))There is a transition probability matrix $P$ with 
\[
Pp_{\theta }=q_{\theta }\,,\,\forall \theta \in \Theta 
\]%
exists if and only if 
\[
D_{f}\left( p_{0}||p_{1}\right) \geq D_{f}\left( q_{0}||q_{1}\right) 
\]%
holds for any proper and closed convex function $f$ on $[0,\infty )$.
\end{lemma}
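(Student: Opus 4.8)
The plan is to split the equivalence into its two implications. The \emph{only if} direction is just the data-processing (monotonicity) inequality for classical $f$-divergence, and the \emph{if} direction is a linear-programming duality argument in which the dual certificate is reinterpreted as a perspective function, hence as an $f$-divergence.

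For monotonicity, fix a proper closed convex $f$ on $[0,\infty)$ and introduce its perspective
\[
g_f(s,t):=t\,f(s/t)\ \ (t>0),\qquad g_f(s,0):=s\lim_{\lambda\to\infty}\frac{f(\lambda)}{\lambda},\qquad g_f(0,0):=0 ,
\]
which is sublinear (convex and positively homogeneous of degree $1$) on the nonnegative quadrant; comparing with the definition of $D_f$ one sees immediately that $D_f(p_0\|p_1)=\sum_{x}g_f(p_0(x),p_1(x))$, the two summands of $D_f$ being exactly the $t>0$ and $t=0$ contributions. If $P$ is a transition matrix with $Pp_\theta=q_\theta$, then $q_\theta(y)=\sum_x P(y|x)p_\theta(x)$, and sublinearity plus homogeneity of $g_f$ give $g_f(q_0(y),q_1(y))\le\sum_x P(y|x)\,g_f(p_0(x),p_1(x))$; summing over $y$ and using $\sum_y P(y|x)=1$ yields $D_f(q_0\|q_1)\le D_f(p_0\|p_1)$.

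For the converse, observe that the existence of $P$ is feasibility of a finite linear program: variables $P(y|x)\ge 0$ subject to $\sum_y P(y|x)=1$ and $\sum_x P(y|x)p_\theta(x)=q_\theta(y)$ for $\theta\in\{0,1\}$. By Farkas' lemma, infeasibility is equivalent to the existence of multipliers $\mu\in\mathbb{R}^{\mathcal{X}}$ and $a,b\in\mathbb{R}^{\mathcal{Y}}$ with $\mu_x+a_y p_0(x)+b_y p_1(x)\le 0$ for all $x,y$ and $\sum_x\mu_x+\sum_y\bigl(a_y q_0(y)+b_y q_1(y)\bigr)>0$. Setting $g(s,t):=\max_{y}(a_y s+b_y t)$, a finite sublinear function on $\mathbb{R}^2$, the first family of inequalities forces $\mu_x\le -g(p_0(x),p_1(x))$, while trivially $a_y q_0(y)+b_y q_1(y)\le g(q_0(y),q_1(y))$; feeding these into the strict inequality gives $\sum_y g(q_0(y),q_1(y))>\sum_x g(p_0(x),p_1(x))$. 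Finally put $f(s):=g(s,1)$: this is a finite convex function on $[0,\infty)$ (so in particular proper and closed), and by positive homogeneity and continuity of $g$ one checks $g=g_f$ on the whole nonnegative quadrant (the value at $t=0$ matching the recession term $\lim_{\lambda\to\infty}f(\lambda)/\lambda$). Hence the last display reads $D_f(q_0\|q_1)>D_f(p_0\|p_1)$, so if the $f$-divergence inequality holds for every proper closed convex $f$, the program is feasible.

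The step I expect to be the main obstacle is the boundary bookkeeping at $t=0$: one must make sure that $g(s,0)$ really equals the recession term $s\lim_{\lambda\to\infty}f(\lambda)/\lambda$ occurring in $D_f$, which is precisely the reason the statement insists on \emph{closed} (lower semicontinuous) $f$; for the certificate delivered by Farkas this is painless because that $g$ is finite, hence continuous, everywhere. It is also worth noting that the argument only ever tests against finitely-piecewise-linear convex $f$ (those of the form $g(\cdot,1)$), which form a subclass of the proper closed convex functions, so testing against that larger class loses nothing and the two implications together give the stated characterization.
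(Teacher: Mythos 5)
Your proof is correct. Note that the paper itself gives no proof of this lemma --- it is quoted from Torgersen with a citation --- so there is nothing internal to compare against; but your argument is essentially the standard one underlying Torgersen's criterion for dichotomies, and it holds up. The two halves are sound: the data-processing direction via the closed perspective $g_f$ (sublinear, hence $g_f(\sum_x P(y|x)v_x)\le\sum_x P(y|x)g_f(v_x)$, with $\sum_y P(y|x)=1$ collapsing the double sum) correctly accounts for the boundary term, since for closed proper convex $f$ the closure of the perspective at $t=0$ is the recession function $s\mapsto s\lim_{\lambda\to\infty}f(\lambda)/\lambda$ and vanishes at the origin, exactly matching the two summands in the paper's definition of $D_f$. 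The converse via Farkas is also right: the multipliers $(\mu,a,b)$ give $\mu_x\le -\max_y(a_yp_0(x)+b_yp_1(x))$, the certificate becomes a strict violation of monotonicity for the sublinear $g(s,t)=\max_y(a_ys+b_yt)$, and $f(s):=g(s,1)$ is a finite piecewise-linear (hence proper, closed) convex function whose perspective recovers $g$ on the whole quadrant, including $g(s,0)=s\max_y a_y=s\lim_{\lambda\to\infty}f(\lambda)/\lambda$. The one observation worth adding is that your argument actually proves slightly more than stated: it shows that testing against the finite maxima of affine functions (equivalently, piecewise-linear convex $f$) already suffices, which is the usual formulation of the sublinear-function criterion and is what makes the reduction to a countable or finite test family possible in Torgersen's treatment.
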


Motivated by the above Lemma, we study the relation between the condition (%
\ref{map}) and a quantum version of $f$-divergence. Among many quantum
versions of $f$-divergence, we use the following one which is defined using
classical-quantum conversion problem: 
\[
D_{f}^{\max }\left( \sigma _{0}||\sigma _{1}\right) :=\min \{D_{f}\left(
p_{0}||p_{1}\right) ;\left\{ p_{\theta }\right\} _{\theta \in \Theta
},\,\Gamma \text{: CPTP with (\ref{map})}\}. 
\]

This quantity can be written more or less explicitly, if $f$ is an operator
convex function on $[0,\infty )$: 
\begin{equation}
D_{f}^{\max }\left( \sigma _{0}||\sigma _{1}\right) =\mathrm{tr}\,\sigma
_{1}f\left( \sigma _{1}^{-1/2}\tilde{\sigma}_{0}\sigma _{1}^{-1/2}\right)
+\left( 1-\mathrm{tr}\,\tilde{\sigma}_{0}\right) \lim_{\lambda \rightarrow
\infty }\frac{f\left( \lambda \right) }{\lambda }  \label{Dmax-formula}
\end{equation}%
where, \ with $\pi _{X}$ denoting the projector onto $\mathrm{supp}\,X$, 
\begin{eqnarray*}
\sigma _{0,11} &:&=\pi _{\sigma _{1}}\sigma _{0}\pi _{\sigma
_{1}},\,\,\sigma _{0,12}:=\pi _{\sigma _{1}}\sigma _{0}\left( \mathbf{1}-\pi
_{\sigma _{1}}\right) , \\
\,\,\sigma _{0,21} &:&=\left( \mathbf{1}-\pi _{\sigma _{1}}\right) \sigma
_{0}\pi _{\sigma _{1}},\,\sigma _{0,22}:=\left( \mathbf{1}-\pi _{\sigma
_{1}}\right) \sigma _{0}\left( \mathbf{1}-\pi _{\sigma _{1}}\right) , \\
\tilde{\sigma}_{0} &=&\sigma _{0,11}-\sigma _{0,12}\left( \sigma
_{0,22}\right) ^{-1}\sigma _{0,21}.
\end{eqnarray*}%
Observe, if $\sigma _{0}$ is invertible, 
\[
\tilde{\sigma}_{0}=\left( \pi _{\sigma _{1}}\sigma _{0}^{-1}\pi _{\sigma
_{1}}\right) ^{-1}. 
\]%
The following property of $D_{f}^{\max }$ will turn out to be useful.

\begin{lemma}
\label{lem:Dmax}\cite{Matsumoto:max-f}(i) If a real valued two-point
function $D_{f}^{Q}\left( \cdot ||\cdot \right) $ of operators is monotone
decreasing by application of CPTP maps,%
\[
D_{f}^{Q}\left( \Lambda \left( \sigma _{0}\right) ||\Lambda \left( \sigma
_{1}\right) \right) \leq D_{f}^{Q}\left( \sigma _{0}||\sigma _{1}\right) 
\]%
and coincide with $D_{f}\left( \cdot ||\cdot \right) $ on commutative
subalgebra, then 
\[
D_{f}^{Q}\left( \sigma _{0}||\sigma _{1}\right) \leq D_{f}^{\max }\left(
\sigma _{0}||\sigma _{1}\right) . 
\]%
(ii) There is a pair $\left( \left\{ q_{\theta }\right\} _{\theta \in \Theta
},\Gamma \right) $ which satisfies 
\[
D_{f}\left( q_{0}||q_{1}\right) =\mathrm{tr}\,\sigma _{1}f\left( \sigma
_{1}^{-1/2}\tilde{\sigma}_{0}\sigma _{1}^{-1/2}\right) +\left( 1-\mathrm{tr}%
\,\tilde{\sigma}_{0}\right) \lim_{\lambda \rightarrow \infty }\frac{f\left(
\lambda \right) }{\lambda } 
\]%
for all convex functions on $[0,\infty )$ at the same time.
\end{lemma}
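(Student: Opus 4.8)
The plan is to prove the two parts separately. Part (i) is a short consequence of the definition of $D_f^{\max}$ and the two hypotheses on $D_f^Q$; part (ii) is proved by exhibiting an explicit optimal ``reverse test'', namely one classical pair $(q_0,q_1)$ and one CPTP map $\Gamma$ that attain the formula simultaneously for every convex $f$.

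For (i), let $(\{p_\theta\}_{\theta\in\Theta},\Gamma)$ be any admissible pair, i.e.\ $\Gamma$ is CPTP and $\Gamma(p_\theta)=\sigma_\theta$. Since $p_0,p_1$ are identified with commuting diagonal matrices, they lie in a commutative subalgebra, so by the coincidence hypothesis $D_f^Q(p_0\|p_1)=D_f(p_0\|p_1)$; monotonicity of $D_f^Q$ under $\Gamma$ then gives
\[
D_f^Q(\sigma_0\|\sigma_1)=D_f^Q(\Gamma(p_0)\|\Gamma(p_1))\le D_f^Q(p_0\|p_1)=D_f(p_0\|p_1).
\]
Taking the infimum over admissible pairs yields $D_f^Q(\sigma_0\|\sigma_1)\le D_f^{\max}(\sigma_0\|\sigma_1)$, which is (i).

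For (ii), set $\mathcal{K}:=\mathrm{supp}\,\sigma_1$ and $L:=\sigma_1^{-1/2}\tilde\sigma_0\sigma_1^{-1/2}$, with the inverse powers of $\sigma_1$ understood on $\mathcal{K}$; then $L\ge 0$ on $\mathcal{K}$, with spectral decomposition $L=\sum_i\lambda_iE_i$ and $\sum_iE_i=\pi_{\sigma_1}$. On the finite set $\mathcal{X}:=\{i\}\cup\{\ast\}$ define
\[
q_1(i):=\mathrm{tr}\,\sigma_1E_i,\quad q_1(\ast):=0,\qquad q_0(i):=\lambda_i\,\mathrm{tr}\,\sigma_1E_i,\quad q_0(\ast):=1-\mathrm{tr}\,\tilde\sigma_0 .
\]
These are probability vectors: $\sum_iq_1(i)=\mathrm{tr}\,\sigma_1\pi_{\sigma_1}=1$, and $\sum_iq_0(i)=\mathrm{tr}\,\sigma_1L=\mathrm{tr}\,\tilde\sigma_0$, which is $\le 1$ once $\sigma_0-\tilde\sigma_0\ge 0$ is established (see below). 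As $E_i\ne 0$ together with $E_i\le\pi_{\sigma_1}$ forces $\mathrm{tr}\,\sigma_1E_i>0$, one has $\mathrm{supp}\,q_1=\{i\}$, so for every convex $f$ on $[0,\infty)$
\[
D_f(q_0\|q_1)=\sum_i\mathrm{tr}(\sigma_1E_i)\,f(\lambda_i)+(1-\mathrm{tr}\,\tilde\sigma_0)\lim_{\lambda\to\infty}\frac{f(\lambda)}{\lambda}=\mathrm{tr}\,\sigma_1f(L)+(1-\mathrm{tr}\,\tilde\sigma_0)\lim_{\lambda\to\infty}\frac{f(\lambda)}{\lambda},
\]
which is the claimed value; and since $q_0,q_1$ were defined without reference to $f$, the same pair works for all $f$ at once. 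To realize the conversion, take the classical--quantum channel $\Gamma(X):=\sum_{x\in\mathcal{X}}\langle e_x|X|e_x\rangle\,\rho_x$, with $\rho_i:=(\mathrm{tr}\,\sigma_1E_i)^{-1}\sigma_1^{1/2}E_i\sigma_1^{1/2}$ (manifestly a density operator) and $\rho_\ast:=(1-\mathrm{tr}\,\tilde\sigma_0)^{-1}(\sigma_0-\tilde\sigma_0)$. A direct computation gives $\sum_iq_1(i)\rho_i=\sigma_1^{1/2}\pi_{\sigma_1}\sigma_1^{1/2}=\sigma_1$ and $\sum_iq_0(i)\rho_i=\sigma_1^{1/2}L\sigma_1^{1/2}=\tilde\sigma_0$, hence $\Gamma(q_1)=\sigma_1$ and $\Gamma(q_0)=\tilde\sigma_0+q_0(\ast)\rho_\ast=\sigma_0$.

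I expect the main obstacle to be the positivity $\sigma_0-\tilde\sigma_0\ge 0$, which is what makes $\rho_\ast$ a legitimate state preparation (and simultaneously gives $\mathrm{tr}\,\tilde\sigma_0\le 1$ and $\mathrm{tr}\,\rho_\ast=1$): writing $\sigma_0$ in $\mathcal{K}\oplus\mathcal{K}^{\perp}$ block form as in the statement, $\sigma_0-\tilde\sigma_0$ coincides with $\sigma_0$ except that its $(1,1)$ block becomes $\sigma_{0,12}(\sigma_{0,22})^{-1}\sigma_{0,21}$, so the Schur complement of its $(2,2)$ block vanishes and positivity follows from the Schur-complement criterion, the range condition being guaranteed by $\sigma_0\ge 0$. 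Everything else is degenerate-case bookkeeping: if $\mathrm{tr}\,\tilde\sigma_0=1$ the point $\ast$ carries no mass and is omitted; and if $L$ has eigenvalue $0$ with $f(0^+)=+\infty$, or $\lim_{\lambda\to\infty}f(\lambda)/\lambda=+\infty$ while $1-\mathrm{tr}\,\tilde\sigma_0=0$, both sides of the last displayed identity take the value dictated by the conventions ($0\cdot\infty=0$) already used in the definition of $D_f$.
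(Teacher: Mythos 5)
Your proof is correct. Note that the paper itself offers no proof of this lemma --- it is quoted from the reference on maximal $f$-divergence --- so there is nothing in the present text to compare against; your argument is essentially the standard one from that reference, and it is sound. Part (i) is exactly the right one-line monotonicity argument (the only implicit hypothesis, that at least one admissible pair exists so that $D_f^{\max}$ is finite/well-defined, is supplied by your part (ii)). In part (ii), the reverse test built from the spectral decomposition of $L=\sigma_1^{-1/2}\tilde\sigma_0\sigma_1^{-1/2}$ with the extra symbol $\ast$ carrying the mass defect $1-\mathrm{tr}\,\tilde\sigma_0$ is the correct construction, and you correctly identified the one nontrivial point, $\sigma_0-\tilde\sigma_0\ge 0$: replacing the $(1,1)$ block of $\sigma_0$ by $\sigma_{0,12}(\sigma_{0,22})^{-1}\sigma_{0,21}$ makes the Schur complement of the $(2,2)$ block vanish, and the range condition $\mathrm{ran}\,\sigma_{0,21}\subseteq\mathrm{ran}\,\sigma_{0,22}$ needed for the generalized Schur-complement criterion does follow from $\sigma_0\ge 0$. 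The verifications $\Gamma(q_1)=\sigma_1$ and $\Gamma(q_0)=\tilde\sigma_0+(\sigma_0-\tilde\sigma_0)=\sigma_0$, and the $f$-independence of the pair $(q_0,q_1,\Gamma)$, all check out, as does the degenerate-case bookkeeping.
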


\bigskip

In the paper we sometimes use the following family of operator convex
functions 
\[
f^{s}\left( \lambda \right) :=-\lambda ^{s},\,\left( 0<s\leq 1\right) . 
\]%
If $s<1$, $\lim_{\lambda \rightarrow \infty }\frac{f\left( \lambda \right) }{%
\lambda }=0$. Thus, 
\begin{eqnarray}
\lim_{s\uparrow 1}D_{f^{s}}\left( p_{0}||p_{1}\right) &=&-\lim_{s\uparrow
1}\sum_{x\in \mathrm{supp\,}p_{1}}\,\left( p_{1}\left( x\right) \right)
^{1-s}\left( p_{0}\left( x\right) \right) ^{s}  \nonumber \\
&=&-\sum_{x\in \mathrm{supp\,}p_{1}}\,p_{0}\left( x\right) ,\,\,\,  \nonumber
\\
\lim_{s\uparrow 1}D_{f^{s}}^{\max }\left( \sigma _{0}||\sigma _{1}\right)
&=&-\lim_{s\uparrow 1}\mathrm{tr}\,\sigma _{1}\left( \sigma _{1}^{-1/2}%
\tilde{\sigma}_{0}\sigma _{1}^{-1/2}\right) ^{s}  \nonumber \\
&=&-\mathrm{tr}\,\tilde{\sigma}_{0}.  \label{Dfs}
\end{eqnarray}%
Meantime, 
\[
D_{f^{1}}\left( p_{0}||p_{1}\right) =D_{f^{1}}^{\max }\left( \sigma
_{0}||\sigma _{1}\right) =-1. 
\]

\section{A necessary and sufficient condition}

To obtain the necessary and sufficient condition for the existence of a CPTP
map $\Gamma $ with (\ref{map}), we use the quantum randomization criterion%
\cite{Jencova:14}\cite{Matsumoto:random}. Let $\mathcal{H}_{D}$ be a Hilbert
space, and $\mathfrak{W}=\left\{ W_{\theta }\right\} _{\theta \in \Theta }$
be a pair of (bounded) operators on $\mathcal{H}_{D}$. We define 
\[
D_{\mathfrak{W}}\left( \sigma _{0}||\sigma _{1}\right) :=\max \left\{
\sum_{\theta \in \Theta }\mathrm{tr}\,W_{\theta }\Lambda \left( \sigma
_{\theta }\right) \,;\Lambda \text{: CPTP from }\mathcal{L}\left( \mathcal{H}%
\right) \text{ to }\mathcal{L}\left( \mathcal{H}_{D}\right) \right\} . 
\]%
Then a CPTP map $\Gamma $ with (\ref{map}) exists if and only if 
\[
D_{\mathfrak{W}}\left( p_{0}||p_{1}\right) \geq D_{\mathfrak{W}}\left(
\sigma _{0}||\sigma _{1}\right) 
\]%
holds for all $\mathfrak{W}$ and all (finite dimensional) $\mathcal{H}_{D}$.

Observe%
\begin{eqnarray*}
D_{\mathfrak{W}}\left( p_{0}||p_{1}\right) &=&\max_{\Lambda \text{:CPTP}%
}\sum_{\theta \in \Theta }\sum_{x\in \mathcal{X}}p_{\theta }\left( x\right) 
\mathrm{tr}\,W_{\theta }\Lambda \left( \left\vert e_{x}\right\rangle
\left\langle e_{x}\right\vert \right) , \\
&=&\max_{\left\{ \rho _{x}\right\} \text{: states on }\mathcal{H}_{D}\text{ }%
}\sum_{\theta \in \Theta }\sum_{x}p_{\theta }\left( x\right) \mathrm{tr}%
\,W_{\theta }\rho _{x}, \\
&=&\sum_{x\in \mathcal{X}}\max_{\rho \text{: a state on }\mathcal{H}_{D}%
\text{ }}\mathrm{tr}\sum_{\theta \in \Theta }p_{\theta }\left( x\right)
\,W_{\theta }\rho , \\
&=&\sum_{x\in \mathcal{X}}r_{\max }\left( \sum_{\theta \in \Theta }p_{\theta
}\left( x\right) \,W_{\theta }\right) \\
&=&\sum_{x\in \mathrm{supp\,}p_{1}}\,p_{1}\left( x\right) f\left( \frac{%
p_{0}\left( x\right) }{p_{1}\left( x\right) }\right) +\left( \sum_{x\not\in 
\mathrm{supp\,}p_{1}}\,p_{0}\left( x\right) \right) \max_{\rho \text{: state
on }\mathcal{H}_{D}\text{ }}\mathrm{tr}\,W_{0}\rho .
\end{eqnarray*}%
where $r_{\max }\left( X\right) $ is the largest eigenvalue of $X$, and 
\[
f\left( \lambda \right) :=r_{\max }\left( \lambda W_{0}+\,W_{1}\right) . 
\]%
Since 
\[
\lim_{\lambda \rightarrow \infty }\frac{f\left( \lambda \right) }{\lambda }%
=\lim_{\lambda \rightarrow \infty }r_{\max }\left( W_{0}+\,\frac{1}{\lambda }%
W_{1}\right) =r_{\max }\left( \,W_{0}\right) , 
\]%
we have%
\[
D_{\mathfrak{W}}\left( p_{0}||p_{1}\right) =D_{f}\left( p_{0}||p_{1}\right)
. 
\]

Note, there are many $D_{\mathfrak{W}}\left( \cdot ||\cdot \right) $ whose
restriction equals $D_{f}\left( \cdot ||\cdot \right) $. Since $D_{\mathfrak{%
W}}\left( \cdot ||\cdot \right) $ is monotone decreasing by application of
CPTP maps almost by definition, they are all bounded from above by $%
D_{f}^{\max }\left( \cdot ||\cdot \right) $ : 
\[
D_{f}^{\max }\left( \sigma _{0}||\sigma _{1}\right) \geq D_{\mathfrak{W}%
}\left( \sigma _{0}||\sigma _{1}\right) . 
\]%
holds. Therefore, if 
\begin{equation}
D_{f}\left( p_{0}||p_{1}\right) \geq D_{f}^{\max }\left( \sigma _{0}||\sigma
_{1}\right)  \label{c-q-monotone}
\end{equation}%
holds for any closed proper convex function $f$ on $[0,\infty )$, a CPTP map
with (\ref{map}) exists. Since $D_{f}^{\max }\left( \cdot ||\cdot \right) $
is monotone decreasing by application of CPTP maps, this condition is
obviously necessary. Thus:

\begin{theorem}
\label{th:all-convex}A CPTP map $\Gamma $ with (\ref{map}) exists if and
only if (\ref{c-q-monotone}) \ holds for any closed proper convex function $%
f $ on $[0,\infty )$.
\end{theorem}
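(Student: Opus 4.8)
The plan is to derive the theorem directly from the quantum randomization criterion quoted above, using the two facts already assembled before the statement: for a witness pair $\mathfrak{W}=(W_{0},W_{1})$ on a finite-dimensional $\mathcal{H}_{D}$ one has $D_{\mathfrak{W}}(p_{0}||p_{1})=D_{f}(p_{0}||p_{1})$ with $f(\lambda)=r_{\max}(\lambda W_{0}+W_{1})$, and $D_{\mathfrak{W}}(\sigma_{0}||\sigma_{1})\le D_{f}^{\max}(\sigma_{0}||\sigma_{1})$. With these in hand the argument closes in two short steps, necessity and sufficiency.

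Necessity is immediate. Assume a CPTP $\Gamma$ with (\ref{map}) exists. Taking $p_{\theta}=\sigma_{\theta}$ and $\Gamma=\mathrm{id}$ in the definition of $D_{f}^{\max}$ gives $D_{f}^{\max}(p_{0}||p_{1})\le D_{f}(p_{0}||p_{1})$ for every closed proper convex $f$; CPTP-monotonicity of $D_{f}^{\max}$ then gives
\[
D_{f}^{\max}(\sigma_{0}||\sigma_{1})=D_{f}^{\max}(\Gamma(p_{0})||\Gamma(p_{1}))\le D_{f}^{\max}(p_{0}||p_{1})\le D_{f}(p_{0}||p_{1}),
\]
which is (\ref{c-q-monotone}).

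For sufficiency, assume (\ref{c-q-monotone}) for all closed proper convex $f$ on $[0,\infty)$, fix an arbitrary finite-dimensional $\mathcal{H}_{D}$ and an arbitrary witness pair $\mathfrak{W}=(W_{0},W_{1})$, and aim to check $D_{\mathfrak{W}}(p_{0}||p_{1})\ge D_{\mathfrak{W}}(\sigma_{0}||\sigma_{1})$; the randomization criterion then supplies the required $\Gamma$. Put $f(\lambda):=r_{\max}(\lambda W_{0}+W_{1})$. First I would record that $f$ is admissible in (\ref{c-q-monotone}): it is finite and continuous on $[0,\infty)$, hence closed and proper, and convex because $r_{\max}(X)=\sup_{\|v\|=1}\langle v|X|v\rangle$ is a supremum of affine functions of $X$, hence of $\lambda$. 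The displayed computation gives $D_{\mathfrak{W}}(p_{0}||p_{1})=D_{f}(p_{0}||p_{1})$; Lemma \ref{lem:Dmax}(i)---applicable since $D_{\mathfrak{W}}$ is CPTP-monotone almost by definition and, by that same computation, agrees with $D_{f}$ on every commuting pair---gives $D_{\mathfrak{W}}(\sigma_{0}||\sigma_{1})\le D_{f}^{\max}(\sigma_{0}||\sigma_{1})$; and the hypothesis gives $D_{f}(p_{0}||p_{1})\ge D_{f}^{\max}(\sigma_{0}||\sigma_{1})$. Concatenating, $D_{\mathfrak{W}}(p_{0}||p_{1})\ge D_{\mathfrak{W}}(\sigma_{0}||\sigma_{1})$; since $\mathfrak{W}$ and $\mathcal{H}_{D}$ were arbitrary, this finishes the proof.

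There is no real obstacle here, the discussion preceding the statement having done the work; the point to treat with care is the quantifier bookkeeping. The randomization criterion ranges over all witness pairs $\mathfrak{W}$, whereas the hypothesis ranges over all closed proper convex $f$, and $\mathfrak{W}\mapsto f_{\mathfrak{W}}$ is many-to-one and a priori not onto. The argument uses only the easy inclusion---that each $f_{\mathfrak{W}}$ falls under the hypothesis---so the write-up must not tacitly assume that every closed proper convex $f$ arises as some $r_{\max}(\lambda W_{0}+W_{1})$. A minor secondary point is the support bookkeeping inside $D_{f}$: the recession term $\lim_{\lambda\rightarrow\infty}f(\lambda)/\lambda=r_{\max}(W_{0})$ must be matched with the $x\notin\mathrm{supp}\,p_{1}$ contributions exactly as in the displayed calculation, and the commutative-agreement input to Lemma \ref{lem:Dmax}(i) should be taken from that calculation, not from formula (\ref{Dmax-formula}) (which was stated only for operator convex $f$).
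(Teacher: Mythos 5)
Your argument is correct and is essentially the paper's own: the theorem is proved by combining the quantum randomization criterion with the computation $D_{\mathfrak{W}}(p_{0}||p_{1})=D_{f}(p_{0}||p_{1})$ for $f(\lambda)=r_{\max}(\lambda W_{0}+W_{1})$ and the bound $D_{\mathfrak{W}}(\sigma_{0}||\sigma_{1})\le D_{f}^{\max}(\sigma_{0}||\sigma_{1})$, with necessity from CPTP-monotonicity of $D_{f}^{\max}$. Your added care about the admissibility of $f_{\mathfrak{W}}$ and the direction of the quantifiers is a sound clarification of what the paper leaves implicit.
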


To our regret, no closed formula of $D_{f}^{\max }\left( \cdot ||\cdot
\right) $ had been found out unless $f$ is operator convex. Indeed, we have
the following negative implication (The proof is done later):

\begin{proposition}
\label{prop:convex-necessary}If \ (\ref{Dmax-formula}) holds for any
positive operators $\rho $ and $\sigma $ which not necessarily with unit
trace, then $f$ has to be operator convex.
\end{proposition}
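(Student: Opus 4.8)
\noindent The plan is to exploit that $D_{f}^{\max}(\cdot||\cdot)$, being defined as a minimum over classical-to-quantum simulations, is \emph{jointly convex} in its two arguments for \emph{every} convex $f$, just as the classical $D_{f}$ is (cf.\ Lemma \ref{lem:c-c-divergence}), and to feed this back into (\ref{Dmax-formula}): if that closed formula is to coincide with $D_{f}^{\max}$ on all positive operators, its right-hand side must inherit the joint convexity, and I claim this already forces $f$ to be operator convex.

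First I would establish the joint convexity of $D_{f}^{\max}$. Given pairs $(\rho_{i},\sigma_{i})$, $i=1,2$, with near-optimal simulations $(\{p_{\theta}^{(i)}\},\Gamma_{i})$, one simulates $(\frac{1}{2}\rho_{1}+\frac{1}{2}\rho_{2},\frac{1}{2}\sigma_{1}+\frac{1}{2}\sigma_{2})$ by taking the disjoint union of the two input sets, the densities $\frac{1}{2}p_{\theta}^{(1)}\oplus\frac{1}{2}p_{\theta}^{(2)}$, and the channel acting as $\Gamma_{1}$ on the first block and $\Gamma_{2}$ on the second. Since $D_{f}$ is additive over direct sums and positively homogeneous of degree one, this simulation has cost $\frac{1}{2}D_{f}(p_{0}^{(1)}||p_{1}^{(1)})+\frac{1}{2}D_{f}(p_{0}^{(2)}||p_{1}^{(2)})$, so $D_{f}^{\max}(\frac{1}{2}\rho_{1}+\frac{1}{2}\rho_{2}||\frac{1}{2}\sigma_{1}+\frac{1}{2}\sigma_{2})\leq\frac{1}{2}D_{f}^{\max}(\rho_{1}||\sigma_{1})+\frac{1}{2}D_{f}^{\max}(\rho_{2}||\sigma_{2})$. (At the outset one should dispose of the degenerate case $\lim_{\lambda\rightarrow\infty}f(\lambda)/\lambda=+\infty$: then the right-hand side of (\ref{Dmax-formula}) is $+\infty$ whenever $\mathrm{tr}\,\rho<1$, while a one-bit simulation shows $D_{f}^{\max}(\rho||\sigma)<\infty$ for a suitable $\sigma$, so the hypothesis is vacuous; the case $f(0)=+\infty$ is equally harmless, since below $f$ is evaluated only on positive definite matrices.)

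Next, assume (\ref{Dmax-formula}) holds for all positive $\rho,\sigma$. For invertible $\rho,\sigma$ we have $\pi_{\sigma}=\mathbf{1}$ and $\tilde{\rho}=\rho$, so the right-hand side of (\ref{Dmax-formula}) equals $\mathrm{tr}\,\sigma f(\sigma^{-1/2}\rho\sigma^{-1/2})$ plus a term depending only on $\mathrm{tr}\,\rho$, hence affine in $(\rho,\sigma)$. By joint convexity of the left-hand side, the functional $(\rho,\sigma)\mapsto\mathrm{tr}\,\sigma f(\sigma^{-1/2}\rho\sigma^{-1/2})$ is then jointly convex on pairs of positive definite operators in every dimension. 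To upgrade this scalar convexity to operator convexity of the perspective $P_{f}(\rho,\sigma):=\sigma^{1/2}f(\sigma^{-1/2}\rho\sigma^{-1/2})\sigma^{1/2}$, I would use the transformer identity $P_{f}(S\rho S,S\sigma S)=SP_{f}(\rho,\sigma)S$, valid for positive definite $S$ (a functional-calculus identity, obtained from the polar decomposition of $S\sigma^{1/2}$). For a unit vector $\psi$ and $S_{\varepsilon}:=(|\psi\rangle\langle\psi|+\varepsilon\mathbf{1})^{1/2}$,
\[
\langle\psi|P_{f}(\rho,\sigma)|\psi\rangle=\lim_{\varepsilon\downarrow0}\mathrm{tr}(S_{\varepsilon}^{2}P_{f}(\rho,\sigma))=\lim_{\varepsilon\downarrow0}\mathrm{tr}\,P_{f}(S_{\varepsilon}\rho S_{\varepsilon},S_{\varepsilon}\sigma S_{\varepsilon}),
\]
and for each fixed $\varepsilon>0$ the last expression is the jointly convex functional above composed with the linear map $(\rho,\sigma)\mapsto(S_{\varepsilon}\rho S_{\varepsilon},S_{\varepsilon}\sigma S_{\varepsilon})$, hence jointly convex; since a pointwise limit of convex functions is convex, $(\rho,\sigma)\mapsto\langle\psi|P_{f}(\rho,\sigma)|\psi\rangle$ is jointly convex for every $\psi$, i.e.\ $P_{f}$ is jointly operator convex. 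Specializing to $\sigma_{1}=\sigma_{2}=\mathbf{1}$ gives $f(\frac{1}{2}\rho_{1}+\frac{1}{2}\rho_{2})\leq\frac{1}{2}f(\rho_{1})+\frac{1}{2}f(\rho_{2})$ for all positive definite $\rho_{i}$; midpoint operator convexity together with continuity of $f$ on $(0,\infty)$ and lower semicontinuity at $0$ then yields that $f$ is operator convex on $[0,\infty)$.

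The step I expect to be the main obstacle is the scalar-to-operator upgrade: one has joint convexity of $\mathrm{tr}\,\sigma f(\sigma^{-1/2}\rho\sigma^{-1/2})$ only with the special weight $\sigma$, and must convert it into joint convexity of the arbitrary matrix element $\langle\psi|P_{f}(\rho,\sigma)|\psi\rangle$; the transformer identity together with the $\varepsilon$-regularization is precisely what bridges this gap. This is also the only place where the hypothesis for \emph{non-normalized} operators is genuinely used, since $S_{\varepsilon}\rho S_{\varepsilon}$ and $S_{\varepsilon}\sigma S_{\varepsilon}$ have unequal traces in general, so one could not get by with (\ref{Dmax-formula}) on density operators alone. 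The remaining ingredients --- additivity and homogeneity of $D_{f}$, the transformer identity, and the passage from midpoint to full operator convexity --- are routine.
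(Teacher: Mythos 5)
Your proposal is correct, but it takes a genuinely different route from the paper's. The paper argues directly from the variational definition: setting $M_{x}:=p_{1}\left( x\right) \sigma _{1}^{-1/2}\Gamma \left( \left\vert e_{x}\right\rangle \left\langle e_{x}\right\vert \right) \sigma _{1}^{-1/2}$ and $d:=\sigma _{1}^{-1/2}\sigma _{0}\sigma _{1}^{-1/2}$, the minimization defining $D_{f}^{\max }$ becomes $\min \left\{ \mathrm{tr}\,\sigma _{1}\sum_{x}f\left( r\left( x\right) \right) M_{x}\right\} $ over POVMs with $\sum_{x}r\left( x\right) M_{x}=d$, which by Naimark dilation equals $\min \left\{ \mathrm{tr}\,\sigma _{1}V^{\dagger }f\left( d^{\prime }\right) V\,;\,d=V^{\dagger }d^{\prime }V\right\} $; if this minimum equals $\mathrm{tr}\,\sigma _{1}f\left( d\right) $ for every $\sigma _{1}>0$ and every dilation, one reads off the operator Jensen inequality $f\left( V^{\dagger }d^{\prime }V\right) \leq V^{\dagger }f\left( d^{\prime }\right) V$, which is precisely the Hansen--Pedersen characterization of operator convexity (the paper invokes Exercise V.2.2 of \cite{Bhatia}). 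You instead never open up the channel: you extract joint convexity of $D_{f}^{\max }$ from its operational definition via direct sums (correct, using additivity and $1$-homogeneity of $D_{f}$), transfer it through (\ref{Dmax-formula}) to the functional $\left( \rho ,\sigma \right) \mapsto \mathrm{tr}\,\sigma f\left( \sigma ^{-1/2}\rho \sigma ^{-1/2}\right) $, and then run the standard perspective-function upgrade (transformer identity plus $S_{\varepsilon }$-regularization) from trace convexity to operator convexity; both of those steps check out, and you are right that the non-normalized hypothesis is what legitimizes the conjugation by $S_{\varepsilon }$. The paper's route buys brevity and uses no convexity of $D_{f}^{\max }$ at all; yours treats $D_{f}^{\max }$ as a black box with two operational properties and establishes joint operator convexity of the perspective as a by-product. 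One small interpretive caveat: your dismissal of the case $\lim_{\lambda \rightarrow \infty }f\left( \lambda \right) /\lambda =+\infty $ hinges on reading the constant $1$ in (\ref{Dmax-formula}) literally for non-normalized $\rho $; under the more natural normalization $\mathrm{tr}\,\rho -\mathrm{tr}\,\tilde{\rho}$ the boundary term vanishes on invertible pairs and your main argument applies unchanged, so nothing is lost either way.
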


\section{Sufficient conditions}

Lemma\thinspace \ref{lem:Dmax}, (ii) implies an upper bound to $D_{f}^{\max
}\left( \sigma _{0}||\sigma _{1}\right) $. Therefore, we have the following
sufficient condition.

\begin{corollary}
If 
\[
D_{f}\left( p_{0}||p_{1}\right) \leq \mathrm{tr}\,\sigma _{1}f\left( \sigma
_{1}^{-1/2}\tilde{\sigma}_{0}\sigma _{1}^{-1/2}\right) +\left( 1-\mathrm{tr}%
\,\tilde{\sigma}_{0}\right) \lim_{\lambda \rightarrow \infty }\frac{f\left(
\lambda \right) }{\lambda } 
\]%
holds for any closed proper convex function $f$ on $[0,\infty )$, (\ref{map}%
) holds.
\end{corollary}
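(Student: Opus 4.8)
The plan is to read the corollary straight off Lemma~\ref{lem:Dmax}(ii) combined with Theorem~\ref{th:all-convex} (equivalently, with the classical randomization criterion, Lemma~\ref{lem:c-c-divergence}). The feature of Lemma~\ref{lem:Dmax}(ii) that makes this work is that it furnishes a \emph{single} pair $\left(\left\{q_\theta\right\}_{\theta\in\Theta},\Gamma\right)$ whose classical $f$-divergence reproduces the right-hand side of the displayed inequality for \emph{all} closed proper convex $f$ simultaneously; if the attaining pair were allowed to depend on $f$, the argument below would collapse. (I read the displayed inequality in the direction $D_f\left(p_0||p_1\right)\ge(\text{right-hand side})$ for every $f$, i.e.\ in the direction in which the upper bound on $D_f^{\max}$ just recorded is actually exploited.)

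First I would fix the pair $\left(\left\{q_\theta\right\}_{\theta\in\Theta},\Gamma\right)$ supplied by Lemma~\ref{lem:Dmax}(ii): $\Gamma$ is CPTP, $\Gamma\left(q_\theta\right)=\sigma_\theta$ for $\theta\in\Theta$, and
\[
D_f\left(q_0||q_1\right)=\mathrm{tr}\,\sigma_1 f\left(\sigma_1^{-1/2}\tilde{\sigma}_0\sigma_1^{-1/2}\right)+\left(1-\mathrm{tr}\,\tilde{\sigma}_0\right)\lim_{\lambda\rightarrow\infty}\frac{f\left(\lambda\right)}{\lambda}
\]
for every closed proper convex $f$. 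Since this pair is admissible in the minimization defining $D_f^{\max}$, we get $D_f^{\max}\left(\sigma_0||\sigma_1\right)\le D_f\left(q_0||q_1\right)$, so the right-hand side above dominates $D_f^{\max}\left(\sigma_0||\sigma_1\right)$ for every $f$. The hypothesis then forces $D_f\left(p_0||p_1\right)\ge D_f^{\max}\left(\sigma_0||\sigma_1\right)$ for all closed proper convex $f$, which is exactly condition~(\ref{c-q-monotone}), and Theorem~\ref{th:all-convex} yields a CPTP map with~(\ref{map}). A more hands-on finish bypasses $D_f^{\max}$ entirely: the hypothesis reads $D_f\left(p_0||p_1\right)\ge D_f\left(q_0||q_1\right)$ for all closed proper convex $f$, so Lemma~\ref{lem:c-c-divergence} produces a transition matrix $P$ with $Pp_\theta=q_\theta$; identifying $P$ with the CPTP map it induces between the commutative matrix algebras, the composite $\Gamma\circ P$ is CPTP and $\left(\Gamma\circ P\right)\left(p_\theta\right)=\Gamma\left(q_\theta\right)=\sigma_\theta$.

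I do not anticipate a substantive obstacle: the whole force of the statement is already packed into Lemma~\ref{lem:Dmax}(ii) and into Theorem~\ref{th:all-convex}/Lemma~\ref{lem:c-c-divergence}. What needs care is only bookkeeping: (a) that the pair produced by Lemma~\ref{lem:Dmax}(ii) genuinely satisfies $\Gamma\left(q_\theta\right)=\sigma_\theta$, so that it lies in the feasible set of $D_f^{\max}$ and so that the composition argument closes; (b) that a classical transition matrix really does define a CPTP map between the relevant finite-dimensional commutative matrix algebras, so that composition with $\Gamma$ stays CPTP; and (c) keeping the quantifiers straight, namely that one and the same $\left\{q_\theta\right\}$ serves every $f$, as emphasized above. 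All three are immediate from the setup fixed in the earlier sections.
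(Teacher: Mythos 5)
Your proof is correct and is essentially the paper's own argument: the paper's (implicit) proof is precisely that the pair supplied by Lemma~\ref{lem:Dmax}(ii) is feasible in the minimization defining $D_f^{\max}$, so the right-hand side upper-bounds $D_f^{\max}\left(\sigma_0||\sigma_1\right)$ for every closed proper convex $f$, and Theorem~\ref{th:all-convex} concludes; your alternative finish via Lemma~\ref{lem:c-c-divergence} and the composition $\Gamma\circ P$ is the same ingredients repackaged. You were also right to read the displayed inequality as $D_f\left(p_0||p_1\right)\geq$ (right-hand side): the $\leq$ as printed must be a typo, since with $\leq$ the claim fails already for $p_0=p_1$ with $\sigma_0\neq\sigma_1$ (Jensen's inequality makes the hypothesis hold vacuously there).
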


There is a sufficient condition which can be described only using operator
convex functions, where the formula (\ref{Dmax-formula}) applies.

\begin{lemma}
\label{lem:f-finite}(Lemma\thinspace 5.2 of \cite{HiaiMosonyiPetzBeny}) If $%
f $ is a complex valued function on finitely many points $\left\{ \lambda
_{i};i\in I\right\} $ $\subset \lbrack 0,\infty )$, then for any pairwise
different positive numbers $\left\{ t_{i};i\in I\right\} $ there exist
complex numbers $\left\{ c_{i};i\in I\right\} $ such that $f\left( \lambda
_{i}\right) =\sum_{j\in I}\frac{c_{j}}{\lambda _{i}+t_{j}}$ $i\in I$.
\end{lemma}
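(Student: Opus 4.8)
The plan is to read the statement as the invertibility of a Cauchy matrix. Put $n := |I|$ and order $I = \{1,\dots,n\}$. The requirements $f(\lambda_i) = \sum_{j\in I}\frac{c_j}{\lambda_i+t_j}$ ($i\in I$) form the linear system $C\mathbf{c} = \mathbf{f}$ in the unknown vector $\mathbf{c} = (c_j)_{j}$, where $\mathbf{f} = (f(\lambda_i))_i$ and $C$ is the $n\times n$ matrix with entries $C_{ij} := (\lambda_i + t_j)^{-1}$. These entries are finite and well defined because $\lambda_i \ge 0$ and $t_j > 0$ give $\lambda_i + t_j \ge t_j > 0$. Hence it suffices to prove that $C$ is invertible; then $\mathbf{c} := C^{-1}\mathbf{f}$ is the desired vector of coefficients. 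Note the $\lambda_i$ are pairwise distinct, since they are the (finitely many) points on which $f$ is defined, and the $t_j$ are pairwise different by hypothesis.

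The key step is the classical Cauchy determinant identity
\[
\det C = \frac{\prod_{1\le i<j\le n}(\lambda_j-\lambda_i)(t_j-t_i)}{\prod_{i=1}^{n}\prod_{j=1}^{n}(\lambda_i+t_j)} .
\]
I would establish this in the usual way: subtracting one fixed row from all the others and factoring the resulting common terms out of rows and columns sets up an induction on $n$; equivalently, one observes that $\det C$, viewed as a rational function of $(\lambda_1,\dots,\lambda_n)$, vanishes whenever any two $\lambda_i$ coincide and has exactly the poles displayed in the denominator, which pins it down up to a scalar fixed by a single limiting evaluation. Given the identity, pairwise distinctness of the $\lambda_i$ and of the $t_j$ makes every factor of the numerator nonzero, so $\det C \neq 0$, $C$ is invertible, and the argument is finished.

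There is essentially no real obstacle here — the statement is pure finite-dimensional linear algebra — so the only care needed is in the non-vanishing bookkeeping. As an alternative that sidesteps the determinant formula altogether, one can look for $\mathbf{c}$ of the form $c_j = P(-t_j)\big/\prod_{k\neq j}(t_k - t_j)$ with $P$ a polynomial of degree $\le n-1$; then $\sum_{j}\frac{c_j}{x+t_j} = \frac{P(x)}{\prod_j (x+t_j)}$ as rational functions, so the interpolation conditions become $P(\lambda_i) = f(\lambda_i)\prod_j(\lambda_i+t_j)$ for all $i$, and such a $P$ exists (and is unique) by Lagrange interpolation through the $n$ distinct nodes $\lambda_i$. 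Either route delivers the claimed coefficients $\{c_i\}_{i\in I}$.
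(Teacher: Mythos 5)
Your argument is correct. The paper itself gives no proof of this lemma --- it is quoted verbatim from Lemma~5.2 of \cite{HiaiMosonyiPetzBeny} --- so there is nothing to compare against in the text, but your reduction to the nonsingularity of the Cauchy matrix $C_{ij}=(\lambda_i+t_j)^{-1}$ is exactly the standard proof: the entries are well defined since $\lambda_i\ge 0$ and $t_j>0$, the Cauchy determinant formula shows $\det C\neq 0$ because the $\lambda_i$ are pairwise distinct (being the points of a set) and the $t_j$ are pairwise different by hypothesis, and hence $\mathbf{c}=C^{-1}\mathbf{f}$ exists. Your alternative via partial fractions and Lagrange interpolation is an equally valid and slightly more constructive route; either one fully establishes the lemma.
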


\begin{theorem}
If 
\begin{equation}
D_{f}\left( p_{0}||p_{1}\right) =D_{f}^{\max }\left( \sigma _{0}||\sigma
_{1}\right)  \label{D=D}
\end{equation}%
for any operator convex function $f$ on $[0,\infty )$, a CPTP map $\Gamma $
with (\ref{map}) exists. In fact, one only has to check identity for $\left(
t+\lambda \right) ^{-1}$ and $-\lambda ^{s}\,$, where $t\geq 0$ and $s\in
\left( s_{0},1\right) $. Here $s_{0}$ is an arbitrary positive number
smaller than $1$.
\end{theorem}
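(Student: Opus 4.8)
The plan is to reduce (\ref{D=D}) to the classical comparison criterion of Lemma~\ref{lem:c-c-divergence}. By Lemma~\ref{lem:Dmax}\,(ii), fix a single pair $(\{q_\theta\}_{\theta\in\Theta},\Gamma)$ with $\Gamma$ a CPTP map, $\Gamma(q_\theta)=\sigma_\theta$, and
\[
D_f(q_0\|q_1)=\mathrm{tr}\,\sigma_1 f\!\left(\sigma_1^{-1/2}\tilde\sigma_0\sigma_1^{-1/2}\right)+(1-\mathrm{tr}\,\tilde\sigma_0)\lim_{\lambda\to\infty}\frac{f(\lambda)}{\lambda}
\]
simultaneously for all convex $f$ on $[0,\infty)$. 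For operator convex $f$ the right-hand side equals $D_f^{\max}(\sigma_0\|\sigma_1)$ by (\ref{Dmax-formula}), so (\ref{D=D}) yields $D_f(p_0\|p_1)=D_f(q_0\|q_1)$ for every operator convex $f$, in particular for $f(\lambda)=(t+\lambda)^{-1}$ with $t\ge 0$ and $f(\lambda)=-\lambda^{s}$ with $s\in(s_0,1)$. If this can be promoted to $D_f(p_0\|p_1)=D_f(q_0\|q_1)$ for all closed proper convex $f$, then Lemma~\ref{lem:c-c-divergence} supplies a transition matrix $P$ with $Pp_\theta=q_\theta$; regarding $P$ as a CPTP map between the diagonal subalgebras, $\Gamma\circ P$ is CPTP and $(\Gamma\circ P)(p_\theta)=\Gamma(q_\theta)=\sigma_\theta$, which is (\ref{map}).

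The core is this promotion. Encode each two-point experiment by a likelihood-ratio measure: put $\nu:=\sum_{x\in\mathrm{supp}\,p_1}p_1(x)\,\delta_{p_0(x)/p_1(x)}$ and $\tilde\nu:=\sum_{x\in\mathrm{supp}\,q_1}q_1(x)\,\delta_{q_0(x)/q_1(x)}$, both finitely supported probability measures on $[0,\infty)$; then for any closed proper convex $f$,
\[
D_f(p_0\|p_1)=\int f\,d\nu+\Bigl(1-\int\lambda\,d\nu\Bigr)\lim_{\lambda\to\infty}\frac{f(\lambda)}{\lambda},
\]
and likewise for $q$ with $\tilde\nu$. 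Choosing $f(\lambda)=(t+\lambda)^{-1}$ kills the $\lim f(\lambda)/\lambda$ term and gives $\int(t+\lambda)^{-1}d\nu=\int(t+\lambda)^{-1}d\tilde\nu$ for every $t>0$. Let $S$ be the (finite) union of the atoms of $\nu$ and of $\tilde\nu$. By Lemma~\ref{lem:f-finite}, for pairwise distinct positive $t_1,\dots,t_{|S|}$ any function $h$ on $S$ can be written as $h(\lambda)=\sum_j c_j(\lambda+t_j)^{-1}$ on $S$; since $\nu$ and $\tilde\nu$ are carried by $S$, this gives $\int h\,d\nu=\sum_j c_j\int(\lambda+t_j)^{-1}d\nu=\sum_j c_j\int(\lambda+t_j)^{-1}d\tilde\nu=\int h\,d\tilde\nu$, and taking $h$ to be the indicators of the points of $S$ forces $\nu=\tilde\nu$. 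Hence $\int\lambda\,d\nu=\int\lambda\,d\tilde\nu$, so the coefficients $1-\int\lambda\,d\nu$ and $1-\int\lambda\,d\tilde\nu$ agree as well; equivalently the escape-to-infinity masses $\sum_{x\notin\mathrm{supp}\,p_1}p_0(x)$ and $1-\mathrm{tr}\,\tilde\sigma_0$ coincide, which is also precisely what the $-\lambda^{s}$ identities yield on letting $s\uparrow 1$ via (\ref{Dfs}). Consequently $D_f(p_0\|p_1)=D_f(q_0\|q_1)$ for every closed proper convex $f$, and the promotion is complete.

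The step I expect to be most delicate is the measure-matching just described, and its care points lie at the two endpoints of $[0,\infty]$: at $\lambda=0$, where $(t+\lambda)^{-1}$ is unbounded, so one should work only with $t>0$ (where all the integrals are finite) and recover a possible atom of $\nu$ at $0$ as the corresponding residue; and at $\lambda=\infty$, where the $\lim f(\lambda)/\lambda$ coefficient must be matched separately---this is exactly why the $-\lambda^{s}$ family is kept alongside the resolvents, the limit (\ref{Dfs}) giving the escape mass $\mathrm{tr}\,\tilde\sigma_0=\sum_{x\in\mathrm{supp}\,p_1}p_0(x)$ transparently even when $\sigma_0$ or $\sigma_1$ is singular. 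A routine additional point is that the randomization $P$ of Lemma~\ref{lem:c-c-divergence} must be read as a genuine CPTP map between the diagonal subalgebras of $\mathcal{L}(\mathcal{H})$ before being composed with $\Gamma$; this is immediate because $P$ acts on commuting matrices. Finally, only $f(\lambda)=(t+\lambda)^{-1}$, $t\ge 0$, and $f(\lambda)=-\lambda^{s}$, $s\in(s_0,1)$, were used, so verifying (\ref{D=D}) for just these functions suffices, giving the last assertion.
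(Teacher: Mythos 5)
Your proof is correct, and its technical core is the same as the paper's: you use Lemma~\ref{lem:f-finite} to span all functions on the finite set of likelihood-ratio values / eigenvalues by the resolvents $(t+\lambda)^{-1}$, and the limit (\ref{Dfs}) as $s\uparrow 1$ to pin down the escape mass $1-\mathrm{tr}\,\tilde\sigma_0$. Where you genuinely diverge is the conclusion. The paper, having matched both terms of (\ref{Dmax-formula}), asserts $D_f(p_0\|p_1)\geq D_f^{\max}(\sigma_0\|\sigma_1)$ for all closed proper convex $f$ and invokes the quantum randomization criterion (Theorem~\ref{th:all-convex}). You instead materialize the optimal classical family $\{q_\theta\}$ of Lemma~\ref{lem:Dmax}\,(ii), prove that the likelihood-ratio distributions $\nu$ and $\tilde\nu$ of $(p_0,p_1)$ and $(q_0,q_1)$ coincide, apply the purely classical Lemma~\ref{lem:c-c-divergence} to get $Pp_\theta=q_\theta$, and compose. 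This buys two things: the final step needs only the classical Torgersen criterion rather than the quantum one, and the intermediate conclusion $\nu=\tilde\nu$ is strictly stronger than divergence matching --- it shows $\{p_\theta\}$ and the optimal classical family are equivalent as experiments. A side effect you half-notice but could state outright: since $\nu=\tilde\nu$ already forces $\int\lambda\,d\nu=\int\lambda\,d\tilde\nu$, the resolvents with $t>0$ alone suffice and the $-\lambda^{s}$ family is logically redundant (the same redundancy is latent in the paper's proof, where one may take $f=\mathrm{id}$ on the finite set after the resolvent step); keeping it, as you do, still establishes the ``in fact'' clause since you verify (\ref{D=D}) only on the listed functions. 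No gaps.
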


\begin{proof}
Let 
\[
\left\{ \lambda _{i};i\in I\right\} :=\left\{ \frac{p_{0}\left( x\right) }{%
p_{1}\left( x\right) };x\in \mathrm{supp\,}p_{1}\,\right\} \cup \mathrm{spec}%
\left\{ \sigma _{1}^{-1/2}\tilde{\sigma}_{0}\sigma _{1}^{-1/2}\right\} . 
\]%
and apply Lemma\thinspace \ref{lem:f-finite}. Suppose (\ref{D=D}) holds for $%
\left( t+\lambda \right) ^{-1}$ , $\forall t\geq 0$. Then 
\[
\sum_{x\in \mathrm{supp\,}p_{1}}\,p_{1}\left( x\right) f\left( \frac{%
p_{0}\left( x\right) }{p_{1}\left( x\right) }\right) =\mathrm{tr}\,\sigma
_{1}f\left( \sigma _{1}^{-1/2}\tilde{\sigma}_{0}\sigma _{1}^{-1/2}\right) 
\]%
holds for any convex function $f$ on $[0,\infty )$. Suppose (\ref{D=D})
holds for $f^{s}\left( \sigma \right) =-\lambda ^{s}$ , $\forall s\in \left(
s_{0},1\right) $. Then considering $s\uparrow 1$, by (\ref{Dfs}),%
\[
\sum_{x\in \mathrm{supp\,}p_{1}}\,p_{0}\left( x\right) =\mathrm{tr}\,\tilde{%
\sigma}_{0}. 
\]%
Therefore, for any $f$, 
\[
\left( \sum_{x\not\in \mathrm{supp\,}p_{1}}\,p_{0}\left( x\right) \right)
\lim_{\lambda \rightarrow \infty }\frac{f\left( \lambda \right) }{\lambda }%
=\left( 1-\mathrm{tr}\,\tilde{\sigma}_{0}\right) \lim_{\lambda \rightarrow
\infty }\frac{f\left( \lambda \right) }{\lambda }. 
\]%
Summing up, we have (\ref{D=D}) for any convex function $f$ on $[0,\infty )$%
. Then application of Theorem\thinspace \ref{th:all-convex} leads to the
assertion.
\end{proof}

\section{A necessary and sufficient condition for special case}

There is a case where we can give "tractable" necessary and sufficient
condition. An example is the case where $\sigma _{1}$ is a pure state (the
dimension of the Hilbert space $\mathcal{H}$ is arbitrary finite integer)
Then 
\[
\tilde{\sigma}_{0}=\sigma _{1}^{-1/2}\tilde{\sigma}_{0}\sigma
_{1}^{-1/2}=\gamma \sigma _{1}, 
\]%
where 
\[
\gamma :=\sigma _{0,11}-\sigma _{0,12}\left( \sigma _{0,22}\right)
^{-1}\sigma _{0,21}. 
\]%
Therefore, by (\ref{Dmax-formula}),%
\[
D_{f}^{\max }\left( \sigma _{0}||\sigma _{1}\right) =f\left( \gamma \right)
+\left( 1-\gamma \right) \lim_{\lambda \rightarrow \infty }\frac{f\left(
\lambda \right) }{\lambda }. 
\]%
Suppose (\ref{map}) holds. With $\delta _{x_{0}}$ being a delta distribution
concentrated on $x_{0}$, we should have 
\[
\Gamma \left( \delta _{x}\right) =\sigma _{1},\,\,\forall x\in \mathrm{supp}%
\,p_{1}, 
\]%
since $\sigma _{1}$ is rank\thinspace -\thinspace 1 projector. Therefore, 
\[
\sigma _{0}=\Gamma \left( p_{0}\right) =\sum_{x\in \mathrm{supp}%
\,p_{1}}p_{0}\left( x\right) \sigma _{1}+\sum_{x\notin \mathrm{supp}%
\,p_{1}}p_{0}\left( x\right) \Gamma \left( \delta _{x}\right) . 
\]%
For this to hold for some choice of $\Gamma \left( \delta _{x}\right) $ ($%
x\notin \mathrm{supp}\,p_{1}$), it is necessary and sufficient that 
\[
\sigma _{0}-\sum_{x\in \mathrm{supp}\,p_{1}}p_{0}\left( x\right) \sigma
_{1}\geq 0 
\]%
holds. (Necessity is trivial. On the other hand, if this inequality holds,
we only have to define 
\[
\Gamma \left( \delta _{x}\right) :=\frac{1}{\sum_{x\notin \mathrm{supp}%
\,p_{1}}p_{0}\left( x\right) }\left( \sigma _{0}-\sum_{x\in \mathrm{supp}%
\,p_{1}}p_{0}\left( x\right) \sigma _{1}\right) ,\,\,x\notin \mathrm{supp}%
\,p_{1}\text{.} 
\]%
)

A necessary and sufficient condition of this is 
\begin{equation}
\sum_{x\in \mathrm{supp}\,p_{1}}p_{0}\left( x\right) \leq \gamma .
\label{sum-p}
\end{equation}%
On the other hand, consider $f^{s}\left( \lambda \right) :=-\lambda
^{s}\,\left( 0<s<1\right) $, which is operator convex. Suppose 
\[
\sum_{x\in \mathrm{supp}\,p_{1}}p_{1}\left( x\right) f^{s}\left( \frac{%
p_{0}\left( x\right) }{p_{1}\left( x\right) }\right) \geq f^{s}\left( \gamma
\right) 
\]%
holds for all $0<s<1$, Then \ letting $s\uparrow 1$, we have 
\[
-\sum_{x\in \mathrm{supp}\,p_{1}}p_{0}\left( x\right) \geq -\left( \gamma
\right) , 
\]%
which is (\ref{sum-p}), or equivalently, (\ref{map}). The result above is
summarized as follows.

\begin{proposition}
When $\sigma _{1}$ is a pure state, then a CPTP map $\Gamma $ with (\ref{map}%
) exists if and only if (\ref{c-q-monotone}) with $f\left( \lambda \right)
=-\lambda ^{s}$ for all $s\in \left( s_{0},1\right) $. Here $s_{0}$ is an
arbitrary positive number smaller than $1$.
\end{proposition}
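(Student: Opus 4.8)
The plan is to prove the two implications separately, building on the computations already carried out in this section together with Theorem~\ref{th:all-convex}.

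For the ``only if'' direction there is essentially nothing new to do: if a CPTP map $\Gamma$ with (\ref{map}) exists, then the necessity part of Theorem~\ref{th:all-convex} already gives (\ref{c-q-monotone}) for \emph{every} closed proper convex $f$ on $[0,\infty)$, in particular for $f(\lambda)=-\lambda^{s}$ with each $s\in(s_{0},1)$. (Spelled out directly: $D_{f^{s}}^{\max}$ is monotone under CPTP maps and reduces to $D_{f^{s}}$ on commutative pairs, so $D_{f^{s}}(p_{0}\|p_{1})=D_{f^{s}}^{\max}(p_{0}\|p_{1})\ge D_{f^{s}}^{\max}(\Gamma(p_{0})\|\Gamma(p_{1}))=D_{f^{s}}^{\max}(\sigma_{0}\|\sigma_{1})$.)

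For the ``if'' direction I would proceed in three steps. First, evaluate both sides of (\ref{c-q-monotone}) for $f=f^{s}$ exploiting that $\sigma_{1}$ is a rank-one projector: since $f^{s}$ is operator convex and $\lim_{\lambda\to\infty}f^{s}(\lambda)/\lambda=0$ for $s<1$, formula (\ref{Dmax-formula}) together with $\tilde{\sigma}_{0}=\sigma_{1}^{-1/2}\tilde{\sigma}_{0}\sigma_{1}^{-1/2}=\gamma\sigma_{1}$ gives $D_{f^{s}}^{\max}(\sigma_{0}\|\sigma_{1})=-\gamma^{s}$, while $D_{f^{s}}(p_{0}\|p_{1})=-\sum_{x\in\mathrm{supp}\,p_{1}}p_{1}(x)^{1-s}p_{0}(x)^{s}$; thus the hypothesis reads $\sum_{x\in\mathrm{supp}\,p_{1}}p_{1}(x)^{1-s}p_{0}(x)^{s}\le\gamma^{s}$ for all $s\in(s_{0},1)$. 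Second, let $s\uparrow 1$; by the limit computation (\ref{Dfs}) (a finite sum of continuous terms on the left, continuity of $t\mapsto t^{s}$ on the right) this yields $\sum_{x\in\mathrm{supp}\,p_{1}}p_{0}(x)\le\gamma$, which is (\ref{sum-p}). Third, invoke the equivalence established just before the proposition: (\ref{sum-p}) holds iff $\sigma_{0}-\sum_{x\in\mathrm{supp}\,p_{1}}p_{0}(x)\,\sigma_{1}\ge 0$, and the latter is exactly what permits defining a CPTP $\Gamma$ realizing (\ref{map}), namely $\Gamma(\delta_{x})=\sigma_{1}$ for $x\in\mathrm{supp}\,p_{1}$ and $\Gamma(\delta_{x})$ the normalized remainder for $x\notin\mathrm{supp}\,p_{1}$.

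The one place that deserves care — and where I would expect a careful reader to pause — is the equivalence of (\ref{sum-p}) with the positivity $\sigma_{0}-(\sum_{x\in\mathrm{supp}\,p_{1}}p_{0}(x))\,\sigma_{1}\ge 0$ used in the third step. Writing $\sigma_{0}$ in block form relative to $\mathrm{supp}\,\sigma_{1}$ and its orthogonal complement, this is a (generalized) Schur-complement statement; the point to record is that $\sigma_{0}\ge 0$ already forces the range condition on the off-diagonal block that both makes $\gamma$ well defined and makes the Schur-complement criterion valid even when $\sigma_{0,22}$ is singular. One should also dispose quickly of the degenerate case $\mathrm{supp}\,p_{0}\subseteq\mathrm{supp}\,p_{1}$, where $\sum_{x\in\mathrm{supp}\,p_{1}}p_{0}(x)=1$, so (\ref{sum-p}) forces $\gamma=1$ and hence $\sigma_{0}=\sigma_{1}$, with $\Gamma$ taken to be any CQ map sending $p_{1}$ to $\sigma_{1}$. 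Everything else is bookkeeping on top of (\ref{Dmax-formula}), (\ref{Dfs}), and Theorem~\ref{th:all-convex}.
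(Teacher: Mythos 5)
Your proposal is correct and follows essentially the same route as the paper: evaluate $D_{f^{s}}^{\max}(\sigma_{0}\|\sigma_{1})=-\gamma^{s}$ using the rank-one structure of $\sigma_{1}$, let $s\uparrow 1$ to extract (\ref{sum-p}), identify (\ref{sum-p}) with the positivity of $\sigma_{0}-\bigl(\sum_{x\in\mathrm{supp}\,p_{1}}p_{0}(x)\bigr)\sigma_{1}$ via the Schur complement, and construct $\Gamma$ explicitly, with the converse supplied by the monotonicity half of Theorem~\ref{th:all-convex}. Your explicit attention to the singular-$\sigma_{0,22}$ range condition and the degenerate case $\mathrm{supp}\,p_{0}\subseteq\mathrm{supp}\,p_{1}$ fills in details the paper leaves implicit, but the argument is the same.
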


\section{Operator convex functions are not enough}

A bad news is that (\ref{c-q-monotone}) for all operator convex functions is
not enough to show the existence of a CPTP map $\Gamma $ with (\ref{map}). A
counter example is constructed by letting both $\left\{ p_{\theta }\right\}
_{\theta \in \Theta }$ and $\left\{ \sigma _{\theta }\right\} _{\theta \in
\Theta }$ be probability distributions on 3-\thinspace points set $\left\{
1,2,3\right\} $. In addition we suppose $p_{1}$ and $\sigma _{1}$ are
uniform distributions, and parameterize $p_{0}$ and $\sigma _{0}$ by 
\[
\sigma _{0}=\left( a,b,c\right) ,\,p_{0}:=\left(
a_{0},\,b_{0}\,,c_{0}\right) ,\, 
\]%
where $c=1-a-b$, \ $c_{0}=1-a_{0}-b_{0}$ and 
\[
a_{0}<b_{0}<c_{0}\,. 
\]

Since the uniform distribution is a fixed point, the stochastic map which
sends $p_{\theta }$ to $q_{\theta }$ is doubly stochastic, or equivalently,
a convex combination of permutations. Therefore, $\sigma _{0}$ has to be in
the convex hull of six points, $\left( a_{0}\,,b_{0},\,c_{0}\right) $, $%
\left( a_{0},\,c_{0},\text{ }b_{0}\,\right) $, $\left(
b_{0}\,,a_{0}\,,c_{0}\right) $, and so on.

\begin{lemma}
\bigskip \label{lem:lowner-2}(Theorem\thinspace 8.1 if \cite%
{HiaiMosonyiPetzBeny}) A continuous real valued function $f$ on $[0,\infty )$
is operator convex if and only if 
\[
f\left( \lambda \right) =f\left( 0\right) +\alpha \lambda +\beta \lambda
^{2}+\int_{\left( 0,\infty \right) }\left( \frac{\lambda }{1+t}-\frac{%
\lambda }{\lambda +t}\right) \mathrm{d}\mu \left( t\right) , 
\]%
where $\alpha $ is a real number, $\beta $ is a non-negative real number,
and $\mu $ is a finite non-negative measure satisfying 
\[
\int_{\left( 0,\infty \right) }\frac{\mathrm{d}\mu \left( t\right) }{\left(
1+t\right) ^{2}}<\infty . 
\]
\end{lemma}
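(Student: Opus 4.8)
The plan is to derive the representation from the two classical pillars of L\"owner theory: the integral representation of operator monotone functions on $(0,\infty)$, and the Hansen--Pedersen characterization relating operator convexity on $[0,\infty)$ to operator monotonicity. I would handle the two implications separately, the ``if'' direction being a direct verification and the ``only if'' direction a transcription of L\"owner's theorem.

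For sufficiency I would set $k_{t}(\lambda):=\frac{\lambda}{1+t}-\frac{\lambda}{\lambda+t}=\frac{\lambda(\lambda-1)}{(1+t)(\lambda+t)}$ and check that each summand on the right-hand side is operator convex: constants and $\lambda$ are operator affine; $\lambda\mapsto\lambda^{2}$ is operator convex because $\frac14(A-B)^{2}\geq 0$; and each $k_{t}$ is operator convex since $k_{t}(\lambda)=\frac{\lambda}{1+t}-1+\frac{t}{\lambda+t}$ while $u\mapsto u^{-1}$ is operator convex on $(0,\infty)$, a property preserved under the affine substitution $u=\lambda+t$. To carry the pointwise convexity inequality under the integral sign I would note that, for $\lambda$ in any bounded interval, $|k_{t}(\lambda)|=O((1+t)^{-2})$ as $t\to\infty$ while $\mu$ is finite on bounded $t$-sets (both from the hypothesis $\int(1+t)^{-2}\,\mathrm{d}\mu(t)<\infty$), so $t\mapsto k_{t}(A)$ integrates in operator norm for every $A\geq 0$, and for $\theta\in[0,1]$ the operator $\int\bigl(\theta k_{t}(A)+(1-\theta)k_{t}(B)-k_{t}(\theta A+(1-\theta)B)\bigr)\,\mathrm{d}\mu(t)$ is $\geq 0$ because every integrand is. Adding the operator affine terms together with $\beta\lambda^{2}$ (here $\beta\geq 0$ is used) then gives operator convexity of the whole.

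For necessity the essential input is the classical fact that a continuous $f$ on $[0,\infty)$ with $f(0)\leq 0$ is operator convex if and only if $\lambda\mapsto f(\lambda)/\lambda$ is operator monotone on $(0,\infty)$; I would apply this to $f-f(0)$ to get that $h(\lambda):=(f(\lambda)-f(0))/\lambda$ is operator monotone on $(0,\infty)$. L\"owner's integral representation, in the normalization whose kernel vanishes at $\lambda=1$, then gives $h(\lambda)=a+b\lambda+\int_{(0,\infty)}\frac{\lambda-1}{\lambda+t}\,\mathrm{d}\rho(t)$ with $b\geq 0$, $\rho\geq 0$ and $\int(1+t)^{-1}\,\mathrm{d}\rho(t)<\infty$; multiplying by $\lambda$ and using $\frac{\lambda(\lambda-1)}{\lambda+t}=(1+t)\,k_{t}(\lambda)$ turns this into $f(\lambda)=f(0)+a\lambda+b\lambda^{2}+\int_{(0,\infty)}(1+t)\,k_{t}(\lambda)\,\mathrm{d}\rho(t)$, so $\alpha:=a$, $\beta:=b$ and $\mathrm{d}\mu(t):=(1+t)\,\mathrm{d}\rho(t)$ yield the asserted formula, the bound $\int(1+t)^{-2}\,\mathrm{d}\mu(t)=\int(1+t)^{-1}\,\mathrm{d}\rho(t)<\infty$ being automatic. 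The real difficulty is carried entirely by the two invoked theorems---L\"owner's integral representation and the Hansen--Pedersen equivalence, which ultimately rest on analytic continuation of operator monotone functions to Pick functions on the upper half plane; since \cite{HiaiMosonyiPetzBeny} supplies exactly these, I would cite them and present only the reduction above, the one delicate bookkeeping point being to fix the normalization of L\"owner's kernel so that the measure $\mu$ produced after multiplication by $\lambda$ meets precisely the stated finiteness bound.
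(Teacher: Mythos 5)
The paper offers no proof of this lemma at all: it is imported verbatim as Theorem\thinspace 8.1 of \cite{HiaiMosonyiPetzBeny}, so there is nothing internal to compare against. Your derivation is the standard one and, as far as I can check, correct: the ``if'' direction by noting $k_t(\lambda)=\frac{\lambda}{1+t}-1+\frac{t}{\lambda+t}$ is affine plus a positive multiple of the operator convex $(\lambda+t)^{-1}$, with the bound $|k_t(\lambda)|\leq\min\{M+1,\,C_M(1+t)^{-2}\}$ on $\lambda\in[0,M]$ justifying the interchange of integral and Jensen inequality; the ``only if'' direction by Hansen--Pedersen ($f-f(0)$ operator convex with value $0$ at $0$ iff $(f(\lambda)-f(0))/\lambda$ operator monotone on $(0,\infty)$) followed by L\"owner's representation in the kernel normalization vanishing at $\lambda=1$, with the bookkeeping $\mathrm{d}\mu=(1+t)\,\mathrm{d}\rho$ converting $\int(1+t)^{-1}\mathrm{d}\rho<\infty$ into exactly $\int(1+t)^{-2}\mathrm{d}\mu<\infty$. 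This is essentially how the cited source proves it, so you have supplied the proof the paper delegates. One genuine discrepancy worth flagging: what you prove (and what \cite{HiaiMosonyiPetzBeny} actually asserts) requires only $\int(1+t)^{-2}\,\mathrm{d}\mu(t)<\infty$, not that $\mu$ be finite; the word ``finite'' in the lemma as stated here is not deliverable in general --- for $f(\lambda)=-\lambda^{1/2}$ the (unique) representing measure is $\mathrm{d}\mu(t)=c\,t^{-1/2}(1+t)\,\mathrm{d}t/ (1+t)$-type and has infinite total mass --- so the statement should be read with ``finite'' deleted, which is the version your argument establishes and the only version the paper subsequently uses.
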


\bigskip

By Lemma\thinspace \ref{lem:lowner-2}, instead of all operator convex
functions, we only have to check (\ref{c-q-monotone}) for $\lambda ^{2}$, $%
\frac{1}{\lambda +t}$. Let 
\begin{eqnarray*}
g_{t}\left( a,b\right) &:&=\frac{1}{a+t}+\frac{1}{b+t}+\frac{\text{ }1}{%
1-a-b+t}-\left\{ \frac{1}{a_{0}+t}+\frac{1}{b_{0}+t}+\frac{1}{c_{0}+t}%
\right\} ,\,\,\left( t\geq 0\right) \\
g_{-1}\left( a,b\right) &:&=a^{2}+b^{2}+\left( 1-a-b\right)
^{2}-a_{0}^{2}-b_{0}^{2}-c_{0}^{2}.
\end{eqnarray*}%
So our purpose is to prove that the set 
\[
\mathcal{C}_{2}:=\bigcap_{t:t\geq 0,t=-1}\{\left( a,b\right)
\,;\,g_{t}\left( a,b\right) \leq 0\}, 
\]%
is not identical to the projection $\mathcal{C}_{1}$ of the convex hull of
the six points to $\left( a,b\right) $-plain. Note the set $\mathcal{C}_{2}$
is convex, and contains the six points. Hence, our task is to find a point
of the set $\mathcal{C}_{2}$ which is not in $\mathcal{C}_{1}$. Observe that
the vertices of $\mathcal{C}_{1}$ are 
\[
\left( a_{0},c_{0}\right) ,\left( b_{0},c_{0}\right) ,\left(
c_{0},b_{0}\right) ,\left( c_{0},a_{0}\right) ,\left( b_{0},a_{0}\right)
,\left( a_{0},b_{0}\right) \text{,} 
\]%
the maximum of $b$-coordinate of $\mathcal{C}_{1}$ is $c_{0}$, and the edge
connecting $\left( a_{0},c_{0}\right) $ and $\left( b_{0},c_{0}\right) $
forms the "upper bound" of $\mathcal{C}_{1}$. Hence, we only have to show
that there is a point in $\mathcal{C}_{2}$ whose $b$-coordinate is strictly
larger than $c_{0}$.

Observe also the line%
\begin{equation}
b=1-2a  \label{b=1-2a}
\end{equation}%
intersects with the edge connecting $\left( a_{0},c_{0}\right) $ and $\left(
b_{0},c_{0}\right) $ at $\left( \frac{a_{0}+b_{0}}{2},c_{0}\right) $. Thus
this line intersects $g_{t}\left( a,b\right) =0$ in the region above $%
b=c_{0} $. Denote the intersection point $\left( a_{t},b_{t}\right) $. Then
since the line segment connecting $\left( a_{t},b_{t}\right) $ and $\left( 
\frac{a_{0}+b_{0}}{2},c_{0}\right) $ is in the set $\{\left( a,b\right)
\,;\,g_{t}\left( a,b\right) \leq 0\}$ , $\mathcal{C}_{2}$ contains the line
segment connecting $\left( a_{\ast },b_{\ast }\right) $, where 
\[
b_{\ast }:=\inf_{t:t\geq 0,t=-1}b_{t}. 
\]%
So we only have to show $b_{\ast }>c_{0}$.

Solving $g_{t}\left( \frac{1}{2}\left( 1-b_{t}\right) ,b_{t}\right) =0$,%
\[
b_{t}=\frac{1}{2\left( 3t^{2}-t+e_{t}\right) }\left\{ 
\begin{array}{c}
2t^{2}\allowbreak +\left( e_{t}-1\right) t+e_{t} \\ 
\pm \sqrt{\left( 24e_{t}-8\right) t^{4}+8e_{t}t^{3}+\left(
9e_{t}^{2}-6e_{t}+1\right) \allowbreak t^{2}+\left( 6e_{t}^{2}-2e_{t}\right)
t+e_{t}^{2}}%
\end{array}%
\right\} , 
\]%
where $e_{t}$ is defined by the identity 
\[
\frac{1}{t}\left( 3-\frac{1}{t}+\frac{e_{t}}{t^{2}}\right) =\frac{1}{a_{0}+t}%
+\frac{1}{b_{0}+t}+\frac{1}{c_{0}+t}. 
\]%
Here, let $t\rightarrow \infty $. Then $e_{t}\rightarrow
a_{0}^{2}+b_{0}^{2}+c_{0}^{2}$ and 
\[
b_{t}\rightarrow \frac{1}{3}\left( 1\pm \sqrt{6\left(
a_{0}^{2}+b_{0}^{2}+c_{0}^{2}\right) -2}\right) . 
\]%
Elementally calculations show that the larger solution of the two is
strictly larger than $c_{0}$. Thus, there is $t_{0}>0$ such that 
\[
\inf_{t>t_{0}}\,b_{t}>\frac{1}{2}\left( c_{0}+\lim_{t\rightarrow \infty
}b_{t}\right) . 
\]%
Therefore, 
\[
b_{\ast }>\min \left\{ \frac{1}{2}\left( c_{0}+\lim_{t\rightarrow \infty
}b_{t}\right) ,\inf_{t\in \lbrack 0,t_{0}]\cup \left\{ -1\right\}
}\,b_{t}\right\} . 
\]%
Since the function $t\rightarrow b_{t}$ is continuous, there is $t_{\ast }$
such that%
\[
\inf_{t\in \lbrack 0,t_{0}]\cup \left\{ -1\right\} }\,b_{t}=b_{t_{\ast }}\,. 
\]

Since $g_{t_{\ast }}\left( a,b\right) =0$ is an algebraic convex curve and
passes through the two points $\left( a_{0},c_{0}\right) $ and $\left(
b_{0},c_{0}\right) $, it cannot coincide with the line connecting these two
points. Thus, $b_{t_{\ast }}>c_{0}$. Therefore, we have $b_{\ast }>c_{0}$,
and $\mathcal{C}_{1}\neq \mathcal{C}_{2}$. Thus (\ref{c-q-monotone}) for all
operator convex functions is not enough for the existence of {}a CPTP map $%
\Gamma $ with (\ref{map}).

\section{Proof of Proposition\thinspace \protect\ref{prop:convex-necessary}}

Suppose $\sigma _{1}>0$ and a CPTP map $\Gamma $ satisfies (\ref{map}). Also
let

\begin{eqnarray*}
M_{x} &:&=p_{1}\left( x\right) \sigma _{1}^{-1/2}\Gamma \left( \left\vert
e_{x}\right\rangle \left\langle e_{x}\right\vert \right) \sigma _{1}^{-1/2},
\\
d &:&=\sigma _{1}^{-1/2}\sigma _{0}\sigma _{1}^{-1/2}, \\
r\left( x\right) &:&=\frac{p_{0}\left( x\right) }{p_{1}\left( x\right) }.
\end{eqnarray*}%
Then 
\begin{eqnarray*}
\sum_{x\in \mathcal{X}}M_{x} &=&\sum_{x}p_{1}\left( x\right) \sigma
_{1}^{-1/2}\Gamma \left( \left\vert e_{x}\right\rangle \left\langle
e_{x}\right\vert \right) \sigma _{1}^{-1/2} \\
&=&\sigma _{1}^{-1/2}\sigma _{1}\sigma _{1}^{-1/2}=\pi _{\sigma _{1}}, \\
\sum_{x\in \mathcal{X}}r\left( x\right) M_{x} &=&d, \\
\mathrm{tr}\,\sigma _{1}M_{x} &=&p_{1}\left( x\right) .
\end{eqnarray*}%
Also, 
\begin{eqnarray*}
D_{f}\left( p_{0}||p_{1}\right) &=&\sum_{x\in \mathcal{X}}p_{1}\left(
x\right) f\left( r\left( x\right) \right) \\
&=&\mathrm{tr}\,\sigma _{1}\sum_{x\in \mathcal{X}}f\left( r\left( x\right)
\right) M_{x},
\end{eqnarray*}%
and 
\[
\mathrm{tr}\,\sigma _{1}f\left( d\right) =\mathrm{tr}\,\sigma _{1}f\left(
\sum_{x\in \mathcal{X}}r\left( x\right) M_{x}\right) . 
\]%
Therefore, 
\begin{eqnarray*}
&&\min \left\{ D_{f}\left( p_{0}||p_{1}\right) ;\Gamma ,\left\{ p_{\theta
}\right\} _{\theta \in \Theta }\text{ with (\ref{map})}\right\} \\
&=&\min \left\{ \mathrm{tr}\,\sigma _{1}\sum_{x\in \mathcal{X}}f\left(
r\left( x\right) \right) M_{x};\,\sum_{x\in \mathcal{X}}M_{x}=\pi
_{x},\,d=\sum_{x\in \mathcal{X}}r\left( x\right) M_{x}\right\} \\
&=&\min \left\{ \mathrm{tr}\,\sigma _{1}V^{\dagger }f\left( d^{\prime
}\right) V;\,V\text{: isometry from }\mathcal{H}^{\prime }\text{ to }%
\mathcal{H}\text{, }d=V^{\dagger }d^{\prime }V\right\}
\end{eqnarray*}%
The second identity above is by Naimark extension theorem, which states
there is a Hilbert space $\mathcal{H}^{\prime }$ which is larger in
dimension than $\mathcal{H}$ , and the projectors $\left\{ P_{x}\right\} $ ,
the isometry $V$ from $\mathcal{H}^{\prime }$ to $\mathcal{H}$ such that%
\[
M_{x}=V^{\dagger }P_{x}V. 
\]%
Note that $\mathcal{H}^{\prime }$, $d^{\prime }$, $V$ is not restricted only
by $d$, and not by $\sigma _{1}$. (Here, recall we had removed the
restriction of trace of $\sigma _{0}$, so that $d$ and $\sigma _{1}$ can
move freely.)

For fixed $\left( \mathcal{H}^{\prime },d^{\prime },V\right) $, if the
inequality 
\[
\mathrm{tr}\,\sigma _{1}f\left( d\right) =\mathrm{tr}\,\sigma _{1}f\left(
V^{\dagger }d^{\prime }V\right) \leq \mathrm{tr}\,\sigma _{1}V^{\dagger
}f\left( d^{\prime }\right) V 
\]%
holds true for any $\sigma _{1}>0$, we should have 
\[
f\left( V^{\dagger }d^{\prime }V\right) \leq V^{\dagger }f\left( d^{\prime
}\right) V. 
\]%
If this holds for any $\left( \mathcal{H}^{\prime },d^{\prime },V\right) $,
then $f$ has to be operator convex (See Exercise V.2.2 of \cite{Bhatia}, for
example).

\bigskip

\end{document}